\def\identity{\leavevmode\hbox{\small1\kern-3.8pt\normalsize1}}
\newtheorem{lemma}{Lemma}
\newcommand{\ket}[1]{| #1 \rangle}
\renewcommand{\epsilon}{\varepsilon}
\begin{document}

\title{Small sets of complementary observables}
\date{\today}

\author{M. Grassl}
\affiliation{Max Planck Institute for the Science of Light, 91058 Erlangen, Germany}

\author{D. McNulty}
\affiliation{Department of Applied Mathematics, Hanyang University (ERICA), 55 Hanyangdaehak-ro, Ansan, Gyeonggi-do, 426-791, Korea}

\author{L. Mi\v{s}ta Jr}
\affiliation{Department of Optics, Palack\'y University, 17 listopadu 12, 771 46 Olomouc, Czech Republic}

\author{T. Paterek}
\affiliation{School of Physical and Mathematical Sciences, Nanyang Technological University, 637371 Singapore, Singapore}
\affiliation{Centre for Quantum Technologies, National University of Singapore, 117543 Singapore, Singapore}
\affiliation{MajuLab, CNRS-UNS-NUS-NTU International Joint Research Unit, UMI 3654 Singapore, Singapore}

\begin{abstract}
Two observables are called complementary if preparing a physical
object in an eigenstate of one of them yields a completely random
result in a measurement of the other.  We investigate small sets of
complementary observables that cannot be extended by yet another
complementary observable.  We construct explicit examples of the
unextendible sets up to dimension $16$ and conjecture certain small
sets to be unextendible in higher dimensions.  Our constructions
provide three complementary measurements, only one observable away
from the ultimate minimum of two observables in the set.  Almost all
of our examples in finite dimension allow to discriminate pure states
from some mixed states, and shed light on the complex topology of the Bloch space of higher-dimensional quantum systems.
\end{abstract}

\maketitle

Quantum mechanics restricts precision with which certain
(incompatible) observables can be measured, as characterised, e.g., by
the Heisenberg uncertainty relation.  Here we study complementary
observables which are, per definition, the most incompatible ones.
The idea is clearly illustrated if one considers sequential
measurements of such observables.  After measuring one of them and
subjecting the resulting object to a measurement of any other
complementary observable the results are completely random and
independent of the particular outcome observed in the first
measurement.  The eigenstates of such observables form so-called
mutually unbiased bases (MUBs), and due to their elementary role in
quantum physics they have received considerable
attention~\cite{IJQI.12.535}.  Typically, researchers ask about the
maximal number of complementary observables. It is known that the
maximal possible number of $d+1$ of them for a $d$-level object can be
reached when $d$ is a power of a prime.  Compelling evidence suggests
that the maximum is strictly smaller for other dimensions.

Here we focus on the opposite question: what is the minimal number of
complementary measurements?  It has been observed in various
construction methods for MUBs that if one starts ``wrongly'', it is
impossible to construct all $d+1$ complementary measurements; one gets
stuck at some strictly smaller
number~\cite{0406175,0502024,PhysRevA.79.012109,PhysRevA.79.052316,JPA.42.245305,JPA.45.102001,JPhysA.46.105301}.
Such sets of complementary observables are called unextendible and
they are known to exist in small dimensions as well as in all
dimensions $p^2$, where $p$ is a prime satisfying $p \equiv 3 \mod
4$~\cite{1502.05245}, and in all dimensions $2^m$ for even
$m$~\cite{1604.04797}.  Other examples are known if the notion of
unextendibility is somewhat
relaxed~\cite{QuantInfComp.14.823,1407.2778,1508.05892}.  From this
perspective we are interested in small sets of unextendible MUBs.

Our results for small dimensions are summarised in
Table~\ref{TAB_SUMMARY}.  The sets that we construct contain no more
than three measurements.  While we were not able to prove that they
are the minimal sets, they contain only one more measurement from the
ultimate minimum\footnote{There always exists at least one MUB with
  respect to a given basis~\cite{PNAS.46.560}.}.  Quite surprisingly
this is also the case for certain prime dimensions where we provide
explicit examples in small dimensions, and a candidate set for more
general dimensions.  We also discuss various approaches to
unextendibility which reveal the existence of unextendible sets of
complementary observables in infinitely many dimensions.

\section{Problem and summary}

\begin{figure}[!b]
\includegraphics[width=0.48\textwidth]{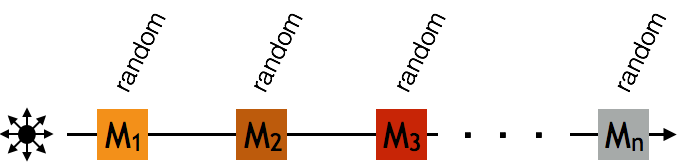}
\caption{The problem studied here can be visualised as follows.  A
  physical system prepared initially in a completely mixed state is
  interrogated by a sequence of $n$ players.  Each of them chooses to
  perform the corresponding projective measurement $M_j$ or forwards
  the system untouched to the next player.  It is required that all
  the measurements have completely random outcomes, independent of the
  results of the previous measurement and which of the measurements
  have been performed.  We ask about the minimal number of
  measurements (respectively players) in this sequence for a $d$-level
  input state.  See Table~\ref{TAB_SUMMARY} for a summary in small
  dimensions and the main text for other dimensions.}
\label{FIG_SCENARIO}
\end{figure}

Consider the scenario depicted in Fig.~\ref{FIG_SCENARIO}.  We are
given a sequence of $n$ players. Starting with a
completely mixed state, player $j$ either performs the measurement
$M_j$ or passes the state unchanged to the next player.  It is
required that all the measurements have completely random outcomes,
independent of the results of the previous measurement and which of
the measurements have been performed.  This is formally encoded by the
property of mutual unbiasedness:
\begin{equation}
|\langle k_j | k'_{j'} \rangle|^2 = 1/d \quad \textrm{ for } \quad  j \ne j',
\end{equation}
where $\ket{k_j}$ is the $k$th state of the eigenbasis of the $j$th
observable.  The question we investigate is how small can $n$ be such
that no player can be added.

\begin{table}[!t]
\begin{tabular}{c | c}
\hline \hline
dimension & cardinality of minimal set \\ \hline \hline
$2$ & $3$ \\ \hline
$3$ & $4$ \\ \hline
$4$ & $3$ \\ \hline
$5$ & $6$ \\ \hline
$6$ & $2$ \\ \hline
$7$ & \textcolor{blue}{$\le$ 3} \\ \hline
\vdots & \raisebox{0.6ex}{\textcolor{blue}{$\le 3$}} \\[0.6ex] \hline
$16$ & \textcolor{blue}{$\le$ 3} \\ \hline
$\infty$ & \textcolor{blue}{$3$ for observables $r( \cos\theta \, \hat q + \sin\theta \, \hat p)$} \\
\hline \hline
\end{tabular}
\caption{Summary of the results.  Note the intriguing case of
  dimension $6$ for which there exists a set with only two
  measurements.  All the numbers in blue are derived here.}
\label{TAB_SUMMARY}
\end{table}

The results are presented in Table~\ref{TAB_SUMMARY}, showing that the
number $n$ of measurements is very small.  Apart from fundamental
interest this also has practical applications in witnessing various
physical properties.  We briefly describe how the unextendible sets
help in witnessing the degree of purity of a state.

A set of MUBs is called \emph{strongly unextendible} if there is not
even a single vector unbiased to all the vectors in the set.  In this
case there is not a single pure quantum state that gives rise to
completely random results when measured with the observables from the
unextendible set.  In other words, the entropy of the measurement
results summed over the observables has an upper bound
$H_\textrm{p}<n\log_2 d$.  Hence, observing larger entropy than
$H_\textrm{p}$ indicates that a mixed state is being measured.  All
our examples in Table~\ref{TAB_SUMMARY}, except for $d=6$, form
strongly unextendible MUBs. Note that any \emph{pair} of MUBs can
never be strongly unextendible since for any two bases of a
$d$-dimensional Hilbert space there exist at least $2^{d-1}$ vectors
unbiased to both \cite{korzekwa2014operational}.

We would also like to note that the existence of unextendible MUBs
provides information on the complex topology of the ``Bloch'' space of
quantum states of higher-dimensional quantum systems. A density matrix
describing a pure state of a $d$-level quantum system can be written
in the operator basis of generalised Pauli matrices
\cite{Jakobczyk_01,Kimura_03}. In this ``Bloch'' representation a pure
quantum state is mapped to a normalised vector with $d^2-1$ real
coefficients. It turns out that the resulting set of states is rather
complex for dimensions $d \ge 3$, and unextendible MUBs provide a hint
for why this is the case.  While mutually unbiased states are mapped
to orthogonal Bloch vectors, orthogonal quantum states are mapped to
non-orthogonal Bloch vectors with an angle $\cos\theta =
-\frac{1}{d-1}$ between them. A basis in the Hilbert space translates
to a regular ($d-1$)-dimensional simplex in the Bloch space
\cite{Appleby_09}. Thus the nonexistence of a vector that is unbiased
to a set of MUBs means that there is not a single unit Bloch vector of
a physical pure quantum state in the orthogonal complement to the
space spanned by the Bloch vectors of unextendible MUBs. These
orthogonal complements are of high dimensionality given by $d^2 - 1 -
m(d-1)$, where $m$ is the number of bases in the unextendible set. For
example, in the Bloch space of a four-level system, there exists a
$6$-dimensional subspace (out of $15$ dimensions in total) with no unit Bloch vector representing a
physical pure state.

The paper is organised around Table~\ref{TAB_SUMMARY}.  In
Section~\ref{SEC_SMALL} we gather previous results from which one can
infer the size of the minimal set of complementary observables up to
dimension $6$ (including $6$).  The new results in the Table are
obtained by different methods which are described in subsequent
sections.  We first arrive at various candidate unextendible sets and
next show that they are indeed unextendible by computer-aided
algebraic methods.  Finally, we discuss various approaches to
unextendibility and use them to prove the existence of unextendible
sets in many new dimensions.\footnote{Note that the word
  ``unextendible'' in connection with bases in Hilbert spaces is
  sometimes used in the literature in a different context.  An
  unextendible product basis, as in Ref.~\cite{PhysRevLett.82.5385},
  denotes an incomplete orthogonal product basis whose complementary
  subspace contains no product state.  Similarly, an unextendible
  maximally entangled basis, as in Ref.~\cite{PhysRevA.84.042306},
  denotes an incomplete set of orthonormal maximally entangled states
  which have no additional maximally entangled vectors orthogonal to
  all of them.}

\section{\boldmath Small dimensions ($d \le 6$)}
\label{SEC_SMALL}

Let us begin by recalling the cardinalities of minimal sets of
complementary observables in dimensions $d \le 6$.  If the standard
basis is represented by the identity matrix, then the matrices
representing all bases unbiased to it must have all entries of the
same modulus.  The results below follow from studies of these
so-called complex Hadamard matrices. Note that two Hadamard matrices
are equivalent if one transforms into the other by multiplication with
permutation and diagonal matrices, and two sets of MUBs are equivalent
if a unitary matrix maps one set to the other.

In dimension two we have no unextendible MUBs. All $2\times 2$
Hadamard matrices are equivalent to the Fourier matrix $F_2$, and only
one triple of MUBs exists. These three bases correspond to the
eigenstates of the Pauli operators $\sigma_x$, $\sigma_y$, and
$\sigma_z$.

Similarly, in dimension three no unextendible MUBs exist.  All
Hadamard matrices are equivalent to the Fourier matrix $F_3$, and six
vectors are mutually unbiased to the pair $\{\openone, F_3 \}$.  These
six vectors can be partitioned into two bases.  Together with
$\openone$ and $F_3$ they form a unique quadruple of MUBs that
contains two inequivalent triples of MUBs.

The smallest case in which unextendible MUBs appear is dimension four,
where an infinite family of cardinality three exists.  In this
dimension all pairs of MUBs are equivalent to a member of the
one-parameter set $\{ \openone, F_4(a)\}$, where $F_4(a)$ is the
one-parameter Fourier family of $4\times 4$ matrices.  Each of these
pairs extends to a mutually unbiased triple $\{\openone, F_4(a),
H_2(b,c)\}$, producing a three-parameter set~\cite{0907.4097}.  If $a
= \pi/2$, then unique sets of four and five MUBs exist (note that a
set of $d$ MUBs in dimension $d$ is always extendible~\cite{PAMS.141.1963}), and other
values of $a$ give rise to an unextendible set.  A simple example of
an unextendible triple of MUBs is the following:
\begin{eqnarray}
B_1 = \left(
\begin{array}{cccc}
1 & 0 & 0 & 0 \\
0 & 1 & 0 & 0 \\
0 & 0 & 1 & 0 \\
0 & 0 & 0 & 1
\end{array}
\right), \qquad
B_2 = \frac{1}{2} \left(
\begin{array}{cccc}
1 & 1 & 1 & 1 \\
1 & i & -1 & -i \\
1 & - 1 & 1 & -1 \\
1 & -i & -1 & i
\end{array}
\right), \nonumber
\end{eqnarray}
\begin{eqnarray}
B_3 = \frac{1}{2} \left(
\begin{array}{cccc}
1 & e^{i a} & 1 & -e^{i a} \\
1 & -e^{ia} & 1 & e^{i a} \\
1 & e^{i b} & -1 & e^{i b} \\
1 & -e^{ib} & -1 & - e^{i b}
\end{array}
\right),
\end{eqnarray}
with $a,b \in [0, \pi)$, $a \ne \pi/2$, and where the rows of each
  matrix represent the orthogonal vectors of each basis.

In dimension five there are no unextendible MUBs.  According
to~\cite{Haagerup} there is a unique choice of Hadamard matrix, $F_5$,
and exactly $20$ vectors are mutually unbiased to the pair $\{\openone,
F_5\}$. These vectors can be partitioned into four orthonormal bases,
resulting in unique combination of six MUBs~\cite{0907.4097}.

It turns out that our problem is also solved for dimension six: there
is a pair of complementary measurements which do not extend to a
triple~\cite{PhysRevA.79.052316}.  This pair corresponds to the set
$\{\openone, S_6 \}$, where $S_6$ is the isolated Hadamard
matrix~\cite{Butson62}. 
\begin{equation}
S_6 = \frac{1}{\sqrt{6}}
\left(
\begin{array}{cccccc}
1 & 1 & 1 & 1 & 1 & 1 \\
1 & 1 & \omega_3 & \omega_3 & \omega_3^2 & \omega_3^2 \\
1 & \omega_3 & 1 & \omega_3^2 & \omega_3^2 & \omega_3 \\
1 & \omega_3 & \omega_3^2 & 1 & \omega_3 & \omega_3^2 \\
1 & \omega_3^2 & \omega_3^2 & \omega_3 & 1 & \omega_3 \\
1 & \omega_3^2 & \omega_3 & \omega_3^2 & \omega_3 & 1
\end{array}
\right),
\end{equation}
with $\omega_3 = \exp(2 \pi i/ 3)$. A complete classification of
Hadamard matrices in dimension six is unknown, but other minimal sets
may exist.

\section{\boldmath Dimensions $7$, $11$, and primes $p\equiv3 \mod 4$}

We begin with a lemma that holds in arbitrary dimensions.  It will
then be utilised in the construction of a particular triple of MUBs
that will be shown to be unextendible by computer-aided techniques.

The lemma utilises the Fourier basis, which is defined as
\begin{equation}
| \tilde\jmath \rangle = F \ket{j} = \frac{1}{\sqrt{d}} \sum_{m = 0}^{d-1} \omega_d^{j m} \ket{m},
\label{FOURIER}
\end{equation}
where $\omega_d = \exp(2 \pi i / d)$ is a $d$th root of unity,
$\{\ket{j}\}$ denotes the standard basis, and $F$ denotes the Fourier
matrix with elements $F_{jm} = \frac{1}{\sqrt{d}} \omega_d^{j m}$,
where the rows and columns are enumerated as $j,m = 0,\dots, d-1$.

\begin{lemma}
Let $\ket{v^{(0)}} = \sum_{m=0}^{d-1} v_m \ket{m}$ be a state that is
unbiased with respect to the Fourier basis.  Then the states
$\ket{v^{(k)}} = \sum_{m=0}^{d-1} v_m \ket{(m+k) \bmod d}$ form an
orthonormal basis that is unbiased with respect to the Fourier basis.
\label{LEM_FOURIER}
\end{lemma}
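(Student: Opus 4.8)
The plan is to realise the shifted states as images of $\ket{v^{(0)}}$ under the cyclic shift operator and to exploit the fact that the Fourier basis diagonalises that shift. Concretely, I would introduce the unitary $X$ defined by $X\ket{m} = \ket{(m+1)\bmod d}$, so that $\ket{v^{(k)}} = X^k\ket{v^{(0)}}$. A one-line reindexing shows that each Fourier vector is an eigenvector, $X\ket{\tilde\jmath} = \omega_d^{-j}\ket{\tilde\jmath}$, and hence $\bra{\tilde\jmath}X^k = \omega_d^{-jk}\bra{\tilde\jmath}$. This is the structural observation on which everything else rests.

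Given this, unbiasedness of the shifted states is immediate: $\braket{\tilde\jmath}{v^{(k)}} = \bra{\tilde\jmath}X^k\ket{v^{(0)}} = \omega_d^{-jk}\braket{\tilde\jmath}{v^{(0)}}$, so the modulus is unchanged and $|\braket{\tilde\jmath}{v^{(k)}}|^2 = |\braket{\tilde\jmath}{v^{(0)}}|^2 = 1/d$ for every $j$ and $k$. The same identity, summed over $j$ using completeness of the Fourier basis, gives $\|\ket{v^{(0)}}\|^2 = \sum_j |\braket{\tilde\jmath}{v^{(0)}}|^2 = d\cdot(1/d) = 1$, and unitarity of $X$ then normalises every $\ket{v^{(k)}}$.

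The one step that needs a genuine idea is orthogonality, since a direct expansion only produces the cyclic autocorrelation $\braket{v^{(0)}}{v^{(r)}} = \sum_n \overline{v_{(n+r)\bmod d}}\,v_n$, which is not obviously zero for $r\neq 0$. I would sidestep this by passing to the eigenbasis of $X$: writing $\ket{v^{(0)}} = \sum_j \hat v_j \ket{\tilde\jmath}$ with $\hat v_j = \braket{\tilde\jmath}{v^{(0)}}$ and $|\hat v_j|^2 = 1/d$, and using $\braket{v^{(k)}}{v^{(l)}} = \bra{v^{(0)}}X^{l-k}\ket{v^{(0)}}$, the shift acts diagonally and yields $\sum_j |\hat v_j|^2\,\omega_d^{-j(l-k)} = \tfrac{1}{d}\sum_j \omega_d^{-j(l-k)}$. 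The geometric sum of $d$th roots of unity collapses this to $\delta_{kl}$ for $k,l\in\{0,\dots,d-1\}$. Thus the $\ket{v^{(k)}}$ are $d$ orthonormal vectors, hence a basis, and each is unbiased to the Fourier basis. The essential point, and the only non-mechanical move, is recognising that the flat ``power spectrum'' $|\hat v_j|^2\equiv 1/d$ encoded by unbiasedness is exactly the condition that forces the autocorrelation to be a delta function, a fact that the Fourier eigenbasis of $X$ makes transparent.
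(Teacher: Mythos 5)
Your proof is correct and follows essentially the same route as the paper: both pass to the Fourier domain, where unbiasedness of $\ket{v^{(0)}}$ becomes a flat power spectrum $|\tilde v_l|^2 = 1/d$, and orthonormality collapses to a geometric sum of roots of unity. Your phrasing via the shift operator $X$ and its diagonalisation by the Fourier basis is a slightly more structural dressing of the identical computation the paper performs with the explicit inverse Fourier transform $F^{\dag}\ket{v^{(k)}}$.
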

\begin{proof}
First we prove that the unbiasedness of the vector $\ket{v^{(0)}}$ implies
that all vectors $\ket{v^{(k)}}$ are unbiased to the Fourier basis.
Indeed,
\begin{eqnarray}
  \left| \langle \tilde\jmath | v^{(k)} \rangle \right|^2
  & = & \left| \frac{1}{\sqrt{d}} \sum_{m=0}^{d-1} v_m \omega_d^{- j (m+k)} \right|^2 \nonumber \\
  & = & \left| \frac{1}{\sqrt{d}} \sum_{m=0}^{d-1} v_m \omega_d^{- j m} \right|^2 = \frac{1}{d},
\label{SHIFT_UNBIAS}
\end{eqnarray}
where the step in between the lines follows from $|\omega_d^{ - j k}|
= 1$, and the last step expresses our assumption about the
unbiasedness of the vector $\ket{v^{(0)}}$.

It remains to be shown that vectors $\{ \ket{v^{(k)}} \}$ form a
basis.  To this end let us first compute their inverse Fourier
transform:
\begin{alignat}{5}
| \tilde{v}^{(k)} \rangle = F^{\dag} \ket{v^{(k)}}&{} = \frac{1}{\sqrt{d}} \sum_{l,m = 0}^{d-1} v_m \omega_d^{-l(m+k)} \ket{l}\nonumber\\
 &{}= \sum_{l =0}^{d-1} \tilde v_l \omega_d^{-k l} \ket{l},
\end{alignat}
where we introduced $\tilde v_l = \frac{1}{\sqrt{d}} \sum_{m =
  0}^{d-1} v_m \omega_d^{-m l}$.  Since the Fourier transform
preserves the inner products, it is sufficient to show that the
transformed vectors form a basis:
\begin{equation}
\langle \tilde{v}^{(k)} | \tilde{v}^{(k')} \rangle = \sum_{l = 0}^{d-1} \omega_d^{l (k - k')} |\tilde v_l|^2 = \frac{1}{d} \sum_{l=0}^{d-1} \omega_d^{l (k-k')} = \delta_{k k'},
\end{equation}
where, due to \eqref{SHIFT_UNBIAS}, we have used that $|\tilde v_l|^2
= 1/d$ for all values of $l$.
\end{proof}

Let us now restrict ourselves to prime dimensions $p$ which are equal
to $p \equiv 3 \mod 4$.

\begin{lemma}
\label{LEM_3M4}
Let $p$ be a prime with $p \equiv 3 \mod 4$.  Then the states
$\ket{v^{(k)}} = \frac{1}{\sqrt{p}} \sum_{m = 0}^{p-1} v_m \ket{(m+k)
  \bmod p}$ with
\begin{equation}
v_m = \Big\{
\begin{array}{rl}
\frac{-(p-1)+ 2 \sqrt{-p}}{p+1} & \textrm{ if } m \in \mathcal{Q}, \\
1 & \textrm{ if } m \in \mathcal{N},
\end{array}
\end{equation}
form a mutually unbiased basis with respect to both the standard and
the Fourier basis.  The sets $\mathcal{Q}$ and $\mathcal{N}$ denote
the quadratic residues (including zero) and non-residues modulo $p$,
respectively.
\end{lemma}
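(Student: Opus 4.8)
The plan is to reduce everything to Lemma~\ref{LEM_FOURIER}, so that only the single seed vector $\ket{v^{(0)}}$ needs to be analysed. First I would verify that the stated coefficients satisfy $|v_m| = 1$ for every $m$. Writing $\sqrt{-p} = i\sqrt{p}$, a direct computation gives, for $m \in \mathcal{Q}$,
\begin{equation}
|v_m|^2 = \frac{(p-1)^2 + 4p}{(p+1)^2} = \frac{(p+1)^2}{(p+1)^2} = 1,
\end{equation}
while $|v_m| = 1$ trivially for $m \in \mathcal{N}$. With the normalisation $1/\sqrt{p}$ this shows at once that each $\ket{v^{(k)}}$ is a unit vector and that all its amplitudes have modulus $1/\sqrt{p}$, i.e. every $\ket{v^{(k)}}$ is unbiased with respect to the standard basis; this property is manifestly invariant under the cyclic shift $m \mapsto (m+k) \bmod p$.

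The core of the argument is to establish that $\ket{v^{(0)}}$ is unbiased with respect to the Fourier basis, since Lemma~\ref{LEM_FOURIER} then upgrades this single unbiasedness relation to the statement that the whole family $\{\ket{v^{(k)}}\}$ forms an orthonormal basis unbiased to the Fourier basis. Concretely I must show that the discrete Fourier transform $S_j = \sum_{m=0}^{p-1} \omega_p^{-jm} v_m$ has constant modulus $|S_j|^2 = p$ for all $j$. The natural tool is the quadratic character (Legendre symbol) $\chi$ modulo $p$: writing $a = v_0 = \frac{-(p-1)+2\sqrt{-p}}{p+1}$ for the residue value, one has $v_m = \frac{a+1}{2} + \frac{a-1}{2}\chi(m)$ for $m \neq 0$, with the point $m=0$ (which lies in $\mathcal{Q}$) treated separately. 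Substituting this decomposition, the sum splits into a geometric part $\sum_{m \neq 0}\omega_p^{-jm} = -1$ (for $j \neq 0$) and a quadratic Gauss sum $\sum_{m}\chi(m)\omega_p^{-jm} = \chi(-j)\,G$ with $G = \sum_m \chi(m)\omega_p^{m}$.

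The decisive input, and the step I expect to carry the weight of the proof, is the classical evaluation of the quadratic Gauss sum combined with the hypothesis $p \equiv 3 \bmod 4$: here $G = \sqrt{-p}$ and $\chi(-1) = -1$, so that $\chi(-j) = -\chi(j)$. Collecting the pieces yields, for $j \neq 0$,
\begin{equation}
S_j = \frac{a-1}{2}\bigl(1 - \chi(j)\sqrt{-p}\bigr),
\end{equation}
and since $|a-1|^2 = 4p/(p+1)$ and $|1 \mp \sqrt{-p}|^2 = 1+p$, one obtains $|S_j|^2 = \frac{p}{p+1}(1+p) = p$. It remains to treat $j = 0$: using that residues and non-residues are equinumerous one finds $S_0 = \frac{p+1}{2}a + \frac{p-1}{2} = \sqrt{-p}$, again with $|S_0|^2 = p$. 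Hence $|\langle \tilde\jmath | v^{(0)}\rangle|^2 = |S_j|^2/p^2 = 1/p$ for every $j$, which is exactly the unbiasedness hypothesis of Lemma~\ref{LEM_FOURIER}. Invoking that lemma then completes the proof, the only genuinely nontrivial ingredient being the sign and modulus of the Gauss sum that are dictated by $p \equiv 3 \bmod 4$.
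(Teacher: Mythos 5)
Your proposal is correct and follows essentially the same route as the paper: verify that the seed vector $\ket{v^{(0)}}$ is unbiased to the Fourier basis via the quadratic Gauss sum evaluation for $p\equiv 3\bmod 4$, then invoke Lemma~\ref{LEM_FOURIER} to lift this to the whole shifted family. Your character-sum bookkeeping is in fact slightly more careful than the paper's, since you retain the factor $\chi(-j)=-\chi(j)$ and obtain $S_j=\frac{a-1}{2}\bigl(1-\chi(j)\sqrt{-p}\bigr)$, whose modulus is manifestly $j$-independent, whereas the paper's intermediate sums over $\mathcal{Q}$ and $\mathcal{N}$ are written only for the case $\chi(j)=1$ (for non-residue $j$ they are complex-conjugated, which of course does not affect $|\tilde v_{\tilde\jmath}|^2$).
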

\begin{proof}
Clearly, for all $m$ we have $|v_m|^2 = 1$, i.e., all vectors
$\ket{v^{(k)}}$ are mutually unbiased with respect to the standard basis.
We now prove that $\ket{v^{(0)}}$ is unbiased with respect to the Fourier
basis.  Consider
\begin{equation}
\tilde v_{\tilde\jmath} = \langle \tilde\jmath | v^{(0)} \rangle = \frac{1}{p} \sum_{m=0}^{p-1} v_m \omega_p^{- j m}.
\end{equation}
For $j=0$ we find
\begin{eqnarray}
\tilde v_{\tilde 0} & = & \frac{1}{p} \sum_{m = 0}^{p-1} v_m = \frac{1}{p} \left( \frac{p-1}{2} + \frac{p+1}{2} \,\, \frac{2 \sqrt{-p}-(p-1)}{p+1} \right) \nonumber \\
& = & \frac{\sqrt{-p}}{p},
\end{eqnarray}
where we have used the fact that the number of elements in
$\mathcal{N}$ ($\mathcal{Q}$) is equal to $\frac{p-1}{2}$
($\frac{p+1}{2}$).  This shows that indeed $|\tilde v_{\tilde 0}|^2 =
1/p$.  In order to find the other coefficients, we first note that for
$j \not\equiv 0 \mod p$:
\begin{eqnarray}
\sum_{m \in \mathcal{Q}} \omega_p^{- j m } & = & \frac{1-\sqrt{-p}}{2}, \\
\text{and}\quad
\sum_{m \in \mathcal{N}} \omega_p^{- j  m } & = & \frac{-1+\sqrt{-p}}{2}.
\end{eqnarray}
Thus, for $j \not\equiv 0\mod p$ we have
\begin{eqnarray}
\tilde v_{\tilde\jmath} & = & \frac{1}{p} \left( \frac{-1+\sqrt{-p}}{2} + \frac{1-\sqrt{-p}}{2} \, \, \frac{2 \sqrt{-p} -(p-1)}{p+1} \right) \nonumber \\
& = & \frac{\sqrt{-p}}{p},
\end{eqnarray}
revealing that $\ket{v^{(0)}}$ is unbiased to the Fourier basis.
The proof concludes by utilising Lemma~\ref{LEM_FOURIER}.
\end{proof}

A concrete example of the basis introduced in Lemma~\ref{LEM_3M4} for
dimension $7$ reads as follows:
\begin{equation}
A = \frac{1}{\sqrt{7}}
\left(
\begin{array}{ccccccc}
\alpha_7 & \alpha_7 & \alpha_7 & 1 & \alpha_7 & 1 & 1 \\
1 & \alpha_7 & \alpha_7 & \alpha_7 & 1 & \alpha_7 & 1 \\
1 & 1 & \alpha_7 & \alpha_7 & \alpha_7 & 1 &\alpha_7 \\
\alpha_7 & 1 & 1 & \alpha_7 & \alpha_7 & \alpha_7 & 1 \\
1 & \alpha_7 & 1 & 1 & \alpha_7 & \alpha_7 & \alpha_7 \\
\alpha_7 & 1 &\alpha_7 & 1 & 1 & \alpha_7 & \alpha_7 \\
\alpha_7 & \alpha_7 & 1 & \alpha_7 & 1 & 1 & \alpha_7
\end{array}
\right),
\label{A7_UMUB}
\end{equation}
where $\alpha_7 \equiv \frac{\sqrt{-7} - 3}{4}$.  To see that the rows
or columns correspond to orthogonal states note that $\alpha_7 +
\alpha_7^* = -\frac{3}{2}$.

We have shown that the standard basis, the Fourier basis, and the
basis of Lemma~\ref{LEM_3M4} form a triple of MUBs.  We verified by
computer-aided techniques that this triple is unextendible in
dimensions $p = 7$ and $p = 11$.  We conjecture that for any prime $p$
with $p \equiv 3 \mod 4$, the triple obtained in this way is
unextendible.  Note that in prime dimensions smaller than $7$ there
are no unextendible sets of MUBs and, furthermore, the only existing
sets of MUBs are maximal.  From this perspective the existence of
basis $A$ is unexpected.

\section{\boldmath Dimension $13$ and primes $p\equiv 1 \mod 4$}

One can proceed along similar lines for other prime dimensions.
This time the candidate basis is given by the following lemma.

\begin{lemma}
\label{LEM_GABIDULIN}
(See also Ref.~\cite{ProbInfTrans.38.255})
Let $p$ be a prime with $p \equiv 1 \mod 4$. 
Define
\begin{equation}
z_0 =cos\theta + i \sin\theta \quad \textrm{ with } \quad \cos\theta = \frac{\sqrt{p} - 1}{p - 1}.
\end{equation}
Then the states $\ket{v^{(k)}} = \frac{1}{\sqrt{d}} \sum_{m = 0}^{p-1} v_m
\ket{(m+k)\bmod p}$, where
\begin{equation}
  v_m =\begin{cases}
  1,     & \text{if $m=0$;} \\
  z_0,   & \text{if $m \in \mathcal{Q}\setminus\{0\}$;} \\
  1/z_0, & \text{if $m \in \mathcal{N}$,}
  \end{cases}
\end{equation}
form a mutually unbiased basis with respect to both the standard and
the Fourier basis.
\end{lemma}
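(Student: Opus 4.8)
The plan is to follow the template of the proof of Lemma~\ref{LEM_3M4}. Because $z_0 = \cos\theta + i\sin\theta = e^{i\theta}$ lies on the unit circle, we have $|v_m| = 1$ for every $m$ (both $z_0$ and $1/z_0 = e^{-i\theta}$ have modulus one), so all the shifted vectors $\ket{v^{(k)}}$ are automatically unbiased with respect to the standard basis. The substance of the lemma is therefore the single claim that $\ket{v^{(0)}}$ is unbiased with respect to the Fourier basis; once this is established for the seed vector, Lemma~\ref{LEM_FOURIER} upgrades it to the full basis $\{\ket{v^{(k)}}\}$ and simultaneously supplies orthonormality, completing the argument.

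To establish Fourier unbiasedness I would compute the coefficients $\tilde v_{\tilde\jmath} = \langle\tilde\jmath|v^{(0)}\rangle = \frac{1}{p}\sum_{m=0}^{p-1} v_m\,\omega_p^{-jm}$ and show $|\tilde v_{\tilde\jmath}|^2 = 1/p$ for every $j$. For $j=0$ the sum splits by residue class into $1 + \frac{p-1}{2}(z_0 + 1/z_0)$; using $z_0 + 1/z_0 = 2\cos\theta$ together with the defining value $\cos\theta = (\sqrt p - 1)/(p-1)$ collapses this to $\sqrt p$, so that $\tilde v_{\tilde 0} = 1/\sqrt p$ and $|\tilde v_{\tilde 0}|^2 = 1/p$. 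For $j\not\equiv 0\bmod p$ the relevant ingredients are the two character sums $S_Q = \sum_{m\in\mathcal{Q}\setminus\{0\}}\omega_p^{-jm}$ and $S_N = \sum_{m\in\mathcal{N}}\omega_p^{-jm}$. These are pinned down by the two identities $S_Q + S_N = -1$ (vanishing of the geometric sum over all nonzero residues) and $S_Q - S_N = \left(\frac{j}{p}\right)\sqrt p$, the latter being the quadratic Gauss sum, which is \emph{real} precisely because $p\equiv 1\bmod 4$. Solving yields $S_Q = \tfrac12(-1 + \epsilon\sqrt p)$ and $S_N = \tfrac12(-1 - \epsilon\sqrt p)$ with $\epsilon = \left(\frac{j}{p}\right) = \pm1$.

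Substituting these into $\tilde v_{\tilde\jmath} = \frac1p(1 + z_0 S_Q + z_0^{-1}S_N)$ and grouping the $z_0$-dependence gives $\tilde v_{\tilde\jmath} = \frac{1}{p}\big(1 - \tfrac12(z_0 + 1/z_0) + \epsilon\sqrt p\,\tfrac12(z_0 - 1/z_0)\big)$, which rewrites as $\frac{1}{p}\big(1 - \cos\theta + i\,\epsilon\sqrt p\,\sin\theta\big)$. Using the simplification $\cos\theta = 1/(\sqrt p + 1)$, the real part equals $\frac{1}{\sqrt p}\cos\theta$ while the imaginary part equals $\frac{\epsilon}{\sqrt p}\sin\theta$, so that $|\tilde v_{\tilde\jmath}|^2 = \frac{1}{p}(\cos^2\theta + \sin^2\theta) = \frac1p$, independent of the sign $\epsilon$. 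I expect the only genuine obstacle to be the Gauss-sum step: one must invoke that for $p\equiv 1\bmod 4$ the quadratic Gauss sum equals $+\sqrt p$ (real, rather than the imaginary $i\sqrt p$ governing the $p\equiv 3\bmod 4$ case of Lemma~\ref{LEM_3M4}), and that $-1$ is a quadratic residue, so replacing $j$ by $-j$ leaves $\epsilon$ unchanged. These two number-theoretic facts are exactly what make the real and imaginary parts land on $\cos^2\theta + \sin^2\theta$ and force the sign dependence to cancel.
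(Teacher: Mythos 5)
Your proposal is correct and follows essentially the same route as the paper's proof: the same split by residue classes, the same evaluation at $j=0$, the same quadratic Gauss sums for $j\not\equiv 0$, and the same appeal to Lemma~\ref{LEM_FOURIER} to pass from the seed vector to the full shifted basis. Your explicit bookkeeping of the sign $\epsilon = \left(\frac{j}{p}\right)$ is in fact slightly more careful than the paper, which states the character-sum values only in the form valid for $j$ a quadratic residue and relies implicitly on the final modulus being independent of that sign.
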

\begin{proof}
Since $z_0$ lies on the unit circle the vectors $\ket{v^{(k)}}$ are
clearly mutually unbiased to the standard basis. Note also that
$1/z_0=z^*_0$, where $z_0^*$ denotes the complex conjugation of $z_0$.
Now we prove that $\ket{v^{(0)}}$ is unbiased with respect to the Fourier
basis.  For $j = 0$ we have
\begin{eqnarray}
\tilde v_{\tilde 0} & = & \frac{1}{p} \sum_{m = 0}^{p-1} v_m = \frac{1}{p} \left(1+\frac{p-1}{2}(z_0+z_0^*)\right) \nonumber \\
& = & \frac{1}{\sqrt{p}},
\end{eqnarray}
where we have used the fact that the sets $\mathcal{N}$ and
$\mathcal{Q}\setminus\{0\}$ each have $\frac{p-1}{2}$ elements, and
$z_0+z_0^*=2\frac{\sqrt{p}-1}{p-1}$.  To find the remaining
coefficients we will need the following two equations:
\begin{eqnarray}
\sum_{m \in \mathcal{Q}\setminus\{0\}} \omega_p^{- j m } & = & \frac{\sqrt{p}-1}{2}, \\
\sum_{m \in \mathcal{N}} \omega_p^{- j m } & = & \frac{-\sqrt{p}-1}{2},
\end{eqnarray}
which apply for $j \not\equiv 0 \mod p$, and $p\equiv 1\mod4$.
Therefore, for $j\not\equiv 0\mod p$:
\begin{equation}
\tilde v_{\tilde\jmath}= \frac{1}{p} \left(1+ \frac{\sqrt{p}-1}{2}z_0 + \frac{-\sqrt{p}-1}{2}z_0^*\right),
\end{equation}
and hence
\begin{eqnarray}
|\tilde v_{\tilde\jmath} |^2 & = & \frac{1}{p^2} \left|1-\frac{1}{2}(z_0 + z_0^*)+\frac{\sqrt{p}}{2}(z_0 - z_0^*)\right|^2
\nonumber \\
& = & \frac{1}{p^2} \left( 2-2\cos\theta+(p-1)(1 - \cos^2\theta) \right)
\nonumber \\
& = & \frac{1}{p}.
\end{eqnarray}
Thus, the vector $\ket{v^{(0)}}$ is unbiased to the Fourier basis, and
by application of Lemma~\ref{LEM_FOURIER} we have completed the proof.
\end{proof}

Again we have verified by computer-aided algebraic methods that this
set is unextendible in dimension $p = 13$, and conjecture that it is
unextendible in all prime dimensions $p \equiv 1 \mod 4$.  In this way
we have now covered all prime dimensions.

\section{Composite dimensions}

In a composite dimension $d = p_1^{e_1} \dots p_m^{e_m}$ (with prime
factors $p_i\ne p_j$ for $i\ne j$) most of our small sets of
complementary observables follow from the results in
Ref.~\cite{JAlgComb.25.111}.  It is useful to introduce the
Weyl-Heisenberg operators:
\begin{eqnarray}
Z_d \ket{j} & = & \omega_d^j \ket{j}, \\
X_d \ket{j} & = & \ket{(j+1) \bmod d}. 
\end{eqnarray}
It turns out that there always exists a set of
\begin{equation}
\xi = \min_i p_i + 1
\label{A_BOUND}
\end{equation}
complementary observables corresponding to the eigenbases of the
operators
\begin{equation}
Z_d, X_d, X_d Z_d, X_d Z_d^2, \dots X_d Z_d^{\xi-2},
\label{ACW}
\end{equation}
that is ``weakly'' unextendible in the sense that it cannot be
extended by another basis which is an eigenbasis of any other
Weyl-Heisenberg operator.  They are therefore natural candidates for
truly unextendible bases.  Indeed, we verified for all even composite
dimensions presented in Table~\ref{TAB_SUMMARY}, the three eigenbases
of $Z_d$, $X_d$, and $X_dZ_d$ are strongly unextendible.  It is
therefore reasonable to conjecture in general that the complementary
observables \eqref{ACW} form an unextendible set.

In dimensions $9$ and $15$ the bound of Eq.~\eqref{A_BOUND} gives four
bases, but we found unextendible sets with only three measurements.
These sets include the standard basis, its Fourier transform, and a
third basis.  However, as it is difficult to recognise the structure
of the third bases, we will not report them here explicitly.

\section{Continuous variables}
\label{SEC_CV}

One can generalise the notion of mutually unbiased bases to an
infinite dimensional Hilbert space by defining two bases $\{
\ket{\phi} \}$ and $\{ \ket{\psi} \}$ as mutually unbiased if $|
\langle \phi | \psi \rangle |^2 = \kappa$, for all vectors
$\ket{\phi}$ and $\ket{\psi}$, with $\kappa$ a positive constant. Here
we also assume that both bases satisfy the orthonormality condition in
terms of Dirac delta, i.e. $\langle \psi | \psi' \rangle = \delta(\psi
- \psi')$ and $\langle \phi | \phi' \rangle = \delta(\phi - \phi')$.

For an example, consider the Hilbert space $L^2(\mathbb{R})$ and the
generalised eigenstates $\ket{q_\theta}$ of the operator $\hat
q_\theta = \cos \theta \, \hat q + \sin \theta \, \hat p$, where $
\hat q$ and $\hat p$ are the position and momentum operators,
respectively.  The overlap of $\ket{q_\theta}$ with
$\ket{q_{\theta'}}$ can be calculated from the Wigner functions of the
quantum states and is given by~\cite{PhysRevA.70.062101}:
\begin{equation}
|\langle q_\theta | q_{\theta'} \rangle |^2 = \frac{1}{2 \pi \hbar |\sin(\theta - \theta')|}.
\end{equation}
Since the overlap depends only on the angles $\theta$ and $\theta'$,
the bases $\{ \ket{q_{\theta}} \}$ and $\{ \ket{q_{\theta'}} \}$ are
mutually unbiased.  By fixing the first basis to be $\{ \ket{q} \}$,
with $\theta=0$, one can find a symmetric triple of MUBs
\cite{PhysRevA.78.020303}: $\{ \ket{q} \}$, $\{ \ket{q_+} \}$, and $\{
\ket{q_-} \}$, where $\hat q_\pm = \cos(2 \pi/3) \hat q \pm \sin(2 \pi
/ 3) \hat p$.

It is straightforward to see from the Wigner functions that
multiplying the operator $\hat q_{\theta'}$ by $r$ results in the
overlap $|\langle q_\theta | r q_{\theta'} \rangle|^2 = 1/2\pi \hbar r
|\sin(\theta - \theta')|$.  We now show that by considering
eigenstates of operators of the form $r \hat q_{\theta}$ it is always
possible to extend any pair of MUBs to a triple of MUBs.
\begin{lemma}
If we consider MUBs from the eigenstates of observables $r \hat
q_\theta$, then no unextendible pair of MUBs exists, and all triples
are unextendible.
\end{lemma}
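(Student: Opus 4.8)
The plan is to parametrise each admissible basis by the pair $(r,\theta)$ labelling the observable $r\hat q_\theta$, and then reduce the whole problem to elementary trigonometry. Extending the paper's overlap formula to two scaled observables (the only effect of the scaling is the extra factor $1/\sqrt{r}$ in each $\delta$-normalised eigenstate) gives $|\langle r_i q_{\theta_i} | r_j q_{\theta_j}\rangle|^2 = 1/(2\pi\hbar\, r_i r_j |\sin(\theta_i-\theta_j)|)$. I would then record the defining condition for a \emph{set} of such bases to be mutually unbiased: all pairwise overlaps must equal one common constant $\kappa$, i.e. $r_i r_j |\sin(\theta_i-\theta_j)| = K$ for a fixed $K := 1/(2\pi\hbar\kappa)$ and every pair $i\neq j$ in the set. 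This common-$\kappa$ requirement is exactly what makes the symmetric triple $\{\ket{q},\ket{q_+},\ket{q_-}\}$ above a genuine set of MUBs.

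For the first claim I would take an arbitrary unbiased pair $(r_1,\theta_1),(r_2,\theta_2)$ with $\theta_1\neq\theta_2\bmod\pi$ and exhibit a third basis $(r_3,\theta_3)$ completing it to a valid triple. Writing $K_{12}=r_1 r_2|\sin(\theta_1-\theta_2)|$, the two conditions $r_1 r_3|\sin(\theta_1-\theta_3)| = r_2 r_3|\sin(\theta_2-\theta_3)| = K_{12}$ decouple: dividing them removes $r_3$ and leaves the single equation $|\sin(\theta_1-\theta_3)|/|\sin(\theta_2-\theta_3)| = r_2/r_1$ for $\theta_3$. As $\theta_3$ runs from $\theta_1$ to $\theta_2$ this ratio sweeps continuously from $0$ to $+\infty$, so by the intermediate value theorem a solution $\theta_3$ exists; setting $r_3 = K_{12}/\bigl(r_1|\sin(\theta_1-\theta_3)|\bigr)$ then satisfies both equations. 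Hence every pair extends and no unextendible pair exists.

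The substantive claim is that no valid triple extends. Given a triple obeying $r_ir_j|\sin(\theta_i-\theta_j)|=K$ for all pairs, a hypothetical fourth basis $(r_4,\theta_4)$ must satisfy the three equations $r_4|\sin(\theta_i-\theta_4)| = K/r_i$, $i=1,2,3$. I would linearise these by setting $X=r_4\cos\theta_4$ and $Y=r_4\sin\theta_4$, which turns $r_4\sin(\theta_i-\theta_4)$ into $X\sin\theta_i - Y\cos\theta_i$; the system becomes three linear equations in the two unknowns $(X,Y)$, with right-hand sides $\epsilon_i K/r_i$ where $\epsilon_i=\mathrm{sgn}\,\sin(\theta_i-\theta_4)$. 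Because the three angles are pairwise distinct mod $\pi$, the coefficient matrix has rank two, so a solution exists only if the $3\times3$ determinant of the augmented matrix vanishes. Expanding along the right-hand column and using that the relevant $2\times2$ minors reduce to $\sin(\theta_b-\theta_a)$, this determinant equals $\epsilon_1\frac{K}{r_1}\sin(\theta_3-\theta_2) - \epsilon_2\frac{K}{r_2}\sin(\theta_3-\theta_1) + \epsilon_3\frac{K}{r_3}\sin(\theta_2-\theta_1)$.

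The punchline is a parity argument. Invoking the triple relations $|\sin(\theta_i-\theta_j)|=K/(r_ir_j)$ once more, each of the three terms has the \emph{same} magnitude $M:=K^2/(r_1r_2r_3)$, so the determinant equals $M(\pm1\pm1\pm1)$, where the signs collect $\epsilon_i$ together with the signs of the sines. A sum of three units is always odd and hence never zero, so the determinant cannot vanish for any $\theta_4$; the linear system is inconsistent and no fourth basis exists, proving every triple unextendible. I expect the only real care to lie in the sign bookkeeping of this last step — confirming that the three magnitudes genuinely coincide and that no choice of the orientation signs $\epsilon_i$ can force cancellation — together with the preliminary check that scaling the observable rescales the eigenstate by $1/\sqrt r$, which is precisely what injects the $r_i r_j$ dependence that drives the whole argument.
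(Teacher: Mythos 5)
Your proposal is correct and, on the substantive half of the lemma, more complete than the paper's own proof. For the extendibility of pairs the two arguments are essentially the same: the paper normalises the first observable to $\hat q$ (so $r_1=1$, $\theta_1=0$) and solves the unbiasedness conditions in closed form, obtaining the two explicit solutions $\phi_\pm=\arctan\bigl(\pm r\sin\theta/(1\pm r\cos\theta)\bigr)$, whereas you keep general $(r_i,\theta_i)$ and get existence of $\theta_3$ from the intermediate value theorem. The closed form buys the explicit two-parameter families of triples quoted afterwards; your version is equally valid provided you run $\theta_3$ over an interval of length less than $\pi$, so that $|\sin(\theta_2-\theta_3)|$ does not vanish en route. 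For the unextendibility of triples the paper merely asserts that ``it is easy to check'' that no fourth basis exists; you actually prove it. The linearisation $X=r_4\cos\theta_4$, $Y=r_4\sin\theta_4$, together with the observation that the $3\times 3$ consistency determinant equals $K^2/(r_1r_2r_3)\,(\pm1\pm1\pm1)\ne 0$ --- the triple relations force the three cofactor terms to have equal magnitude, and an odd number of signed units cannot cancel, uniformly over the unknown signs $\epsilon_i=\mathrm{sgn}\,\sin(\theta_i-\theta_4)$ --- is a clean and correct way to do so. The only bookkeeping worth adding explicitly is that a valid triple forces the $\theta_i$ to be pairwise distinct modulo $\pi$ (otherwise $K=0$), so the coefficient matrix indeed has rank two and each $\epsilon_i$ is genuinely $\pm 1$; your preliminary check that the $\delta$-normalised eigenstates pick up a factor $1/\sqrt r$ under scaling, which yields the $r_ir_j$ dependence of the overlap, is also exactly right.
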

\begin{proof}
We rotate the coordinate system and set the unit of length such that
for the first operator we have $\theta = 0$ and $r=1$.  With this
choice, we now show that MUBs corresponding to the operators $\hat q$
and $r \hat q_\theta = r(\cos\theta \hat q + \sin\theta \hat p)$
always extend to a third MUB for all $\theta \in (0, 2\pi)$, $\theta
\ne \pi$, and $r > 0$.
We consider a third operator $s \hat q_\phi$ and show that the mutual
unbiasedness equations
\begin{equation}
r |\sin \theta| = s |\sin \phi| = rs |\sin(\theta - \phi)|,
\end{equation}
are satisfied for a particular choice of $s$ and $\phi$.  Choosing $s
= \frac{r |\sin\theta|}{|\sin\phi|}$, the first equality is satisfied
for all $\theta$, $\phi\ne\pi$, and $r$.  The final condition becomes
$|\sin\phi| = r |\sin(\theta - \phi)|$, which is satisfied when
\begin{equation}
\phi_{\pm} = \arctan \left(\frac{\pm r \sin \theta}{1 \pm r \cos \theta} \right).
\label{PHI_PM}
\end{equation}
Hence, the operators $(\hat q, r \hat q_{\theta}, s_\pm \hat
q_{\phi_\pm})$ with $s_{\pm} = \frac{r|\sin\theta|}{\sin(\phi_\pm)}$
and $\phi_\pm$ in Eq.~\eqref{PHI_PM} correspond to two sets of three
MUBs, each with two parameters.  If we consider a fourth basis
corresponding to the operator $t \hat q_\nu$, it is easy to check that
there is no mutually unbiased set of four bases.
\end{proof}

For example, choosing $r=1$ and $\theta = 2 \pi /3$ we arrive at two
triples, the symmetric mutually unbiased triple $(\hat q, \hat q_{2
  \pi / 3}, \hat q_{4 \pi /3})$, and $(\hat q, \hat q_{2 \pi /3},
\sqrt{3} \hat q_{\pi/6})$.  Choosing $r=1$ and $\theta = \pi/2$ we
find the asymmetric triples $(\hat q, \hat p, \sqrt{2} \hat q_{\pm
  \pi/4})$~\cite{PhysRevA.78.020303}.

\section{Approaches to unextendibility}

Here we discuss two approaches which reveal that unextendible MUBs are
present in infinitely many dimensions.  Related result can be found in
Refs.~\cite{1502.05245,1604.04797}, but our analysis extends to other
dimensions.

\subsection{Real Hadamards}

Both of our approaches utilise a result which limits the number of
\emph{real} bases in a complete set of MUBs ~\cite{1201.0631}. In
particular, given a complete set of MUBs in dimension $d$, written in
the form $\{ \openone, H_1, \dots, H_d\}$, where $\openone$ is the
identity matrix and $H_i$ are complex Hadamard matrices of order $d$,
then at most one of the Hadamard matrices $H_i$ can be
real. Therefore, any set of complementary observables represented by
the identity and at least two real Hadamard matrices is part of an
unextendible set.

In dimensions $d = 4^i$, constructions given in
Refs.~\cite{IndagMath.35.1,ProcLonMathSoc.75.436,0502024} find $d/2 +
1$ real MUBs (in some of these dimensions Ref.~\cite{1410.4059} finds
a smaller number of real MUBs but still larger than $3$).  These bases
can be transformed into a set of $d/2$ real Hadamard matrices, which
therefore belong to an unextendible set of MUBs. While no additional
real basis is mutually unbiased to this set, $d/2+1$ gives only a
lower bound on the cardinality of the unextendible set. Other complex
Hadamard matrices may exist which represent additional MUBs with
respect to the whole set.

\subsection{Mutually orthogonal Latin squares}

A better lower bound is achieved, and for a larger set of dimensions,
by starting with mutually orthogonal Latin squares (MOLS).
Ref.~\cite{0407081} describes a procedure using Hadamard matrices and
MOLS, which transforms a set of $m$ squares of order $d$ to a set of
$m+2$ MUBs in dimension $d^2$.  Bachelor Latin squares are squares
which have no orthogonal mate (i.e., an unextendible set of MOLS with
$m=1$), and have recently been shown to exist for all $d >
3$~\cite{DCC.40.121,DCC.40.131}.  We therefore use the construction
of~\cite{0407081} with a bachelor Latin square and a real Hadamard
matrix of order $d$. The Hadamard conjecture suggests that such a
matrix always exists for all $d=4s$ and according to most recent
checks, the first case in which a real Hadamard matrix has not yet
been found is $s = 668$~\cite{kharaghani2005hadamard}. With these
ingredients the procedure yields three real MUBs $\{ B_1, B_2, B_3
\}$. To transform this set to the standard form, we multiply each
matrix with the transposed matrix $B_1^T$, i.e. $\{ \openone, B_2
B_1^T, B_3 B_1^T\}$. The matrices $B_2 B_1^T$ and $B_3 B_1^T$ are now
two real Hadamard matrices, and the triple of MUBs does not extend to a
complete set. Unfortunately, this lower bound is not tight. We have
verified in dimension $d = 16$ that the three MUBs obtained in this
way can be extended to at least five bases.

Finally, we utilise the concept of unextendibility to comment briefly
on any conjectured strong relationship between MOLS and MUBs.  It has
been observed that certain complete sets of MOLS translate into
complete sets of MUBs in prime and prime-power dimensions via an
algorithm given in Ref.~\cite{PhysRevA.79.012109}.  Similarly, in a
reverse process, one can construct complete sets of MOLS from certain
complete sets of MUBs~\cite{hall2010mutually}.  It might therefore be
tempting to conjecture a strong statement that to every set of MOLS
there corresponds a set of MUBs.  For the algorithm given in
Ref.~\cite{PhysRevA.79.012109} this is disproved in
Ref.~\cite{PhysScr.T140.014031}. Here we note that such a conjecture
cannot be true for any algorithm directly linking sets of MOLS and
MUB, and preserving (un)extendibility.  This follows from the
observation that a bachelor Latin square should correspond to a set of
three unextendible MUBs in any dimension $d > 3$. However, this is not
the case for dimension five where no unextendible set of MUBs exists
(cf. Section~\ref{SEC_SMALL}). We finish by noting that the weaker
statement given in Ref.~\cite{JOptB.6.L19}, that a maximal set of MUBs
exists in those dimensions $d$ for which one can find a maximal set of
MOLS of order $d$, might still hold true.

\section{Conclusions}

We presented small and unextendible sets of complementary observables,
usually containing three measurements.  It is of course interesting to
ask if even smaller sets exist with the ultimate limit of two.
Mathematically these would correspond to so-called bachelor Hadamard
matrices, which so far are only known to exist in dimension six.  We
therefore hope our study will motivate the search for bachelor
Hadamards in other dimensions as well as the search for further
applications of unextendible MUBs.

\section{Acknowledgments}
We thank \L{}ukasz Rudnicki for pointing out that unextendible MUBs
can distinguish some mixed states from pure states. This work is
supported by the National Research Foundation and Singapore Ministry
of Education Academic Research Fund Tier 2 project MOE2015-T2-2-034.
D. M. acknowledges support by the Institute for Information \&
Communications Technology Promotion (IITP) grant funded by the Korean
government (MSIP) (No. R0190-16-2028, PSQKD).

\bibliographystyle{apsrev4-1}


\begin{thebibliography}{39}%
\makeatletter
\providecommand \@ifxundefined [1]{%
 \@ifx{#1\undefined}
}%
\providecommand \@ifnum [1]{%
 \ifnum #1\expandafter \@firstoftwo
 \else \expandafter \@secondoftwo
 \fi
}%
\providecommand \@ifx [1]{%
 \ifx #1\expandafter \@firstoftwo
 \else \expandafter \@secondoftwo
 \fi
}%
\providecommand \natexlab [1]{#1}%
\providecommand \enquote  [1]{``#1''}%
\providecommand \bibnamefont  [1]{#1}%
\providecommand \bibfnamefont [1]{#1}%
\providecommand \citenamefont [1]{#1}%
\providecommand \href@noop [0]{\@secondoftwo}%
\providecommand \href [0]{\begingroup \@sanitize@url \@href}%
\providecommand \@href[1]{\@@startlink{#1}\@@href}%
\providecommand \@@href[1]{\endgroup#1\@@endlink}%
\providecommand \@sanitize@url [0]{\catcode `\\12\catcode `\$12\catcode
  `\&12\catcode `\#12\catcode `\^12\catcode `\_12\catcode `\%12\relax}%
\providecommand \@@startlink[1]{}%
\providecommand \@@endlink[0]{}%
\providecommand \url  [0]{\begingroup\@sanitize@url \@url }%
\providecommand \@url [1]{\endgroup\@href {#1}{\urlprefix }}%
\providecommand \urlprefix  [0]{URL }%
\providecommand \Eprint [0]{\href }%
\providecommand \doibase [0]{http://dx.doi.org/}%
\providecommand \selectlanguage [0]{\@gobble}%
\providecommand \bibinfo  [0]{\@secondoftwo}%
\providecommand \bibfield  [0]{\@secondoftwo}%
\providecommand \translation [1]{[#1]}%
\providecommand \BibitemOpen [0]{}%
\providecommand \bibitemStop [0]{}%
\providecommand \bibitemNoStop [0]{.\EOS\space}%
\providecommand \EOS [0]{\spacefactor3000\relax}%
\providecommand \BibitemShut  [1]{\csname bibitem#1\endcsname}%
\let\auto@bib@innerbib\@empty
\bibitem [{\citenamefont {Durt}\ \emph {et~al.}(2010)\citenamefont {Durt},
  \citenamefont {Englert}, \citenamefont {Bengtsson},\ and\ \citenamefont
  {\.Zyczkowski}}]{IJQI.12.535}%
  \BibitemOpen
  \bibfield  {author} {\bibinfo {author} {\bibfnamefont {T.}~\bibnamefont
  {Durt}}, \bibinfo {author} {\bibfnamefont {B.-G.}\ \bibnamefont {Englert}},
  \bibinfo {author} {\bibfnamefont {I.}~\bibnamefont {Bengtsson}}, \ and\
  \bibinfo {author} {\bibfnamefont {K.}~\bibnamefont {\.Zyczkowski}},\
  }\href@noop {} {\bibfield  {journal} {\bibinfo  {journal} {Int. J. Quant.
  Inf.}\ }\textbf {\bibinfo {volume} {12}},\ \bibinfo {pages} {535} (\bibinfo
  {year} {2010})}\BibitemShut {NoStop}%
\bibitem [{\citenamefont {Grassl}(2004)}]{0406175}%
  \BibitemOpen
  \bibfield  {author} {\bibinfo {author} {\bibfnamefont {M.}~\bibnamefont
  {Grassl}},\ }\href@noop {} {\bibfield  {journal} {\bibinfo  {journal}
  {quant-ph/0406175}\ } (\bibinfo {year} {2004})}\BibitemShut {NoStop}%
\bibitem [{\citenamefont {Boykin}\ \emph {et~al.}(2005)\citenamefont {Boykin},
  \citenamefont {Sitharam}, \citenamefont {Tarifi},\ and\ \citenamefont
  {Wocjan}}]{0502024}%
  \BibitemOpen
  \bibfield  {author} {\bibinfo {author} {\bibfnamefont {P.}~\bibnamefont
  {Boykin}}, \bibinfo {author} {\bibfnamefont {M.}~\bibnamefont {Sitharam}},
  \bibinfo {author} {\bibfnamefont {M.}~\bibnamefont {Tarifi}}, \ and\ \bibinfo
  {author} {\bibfnamefont {P.}~\bibnamefont {Wocjan}},\ }\href@noop {}
  {\bibfield  {journal} {\bibinfo  {journal} {quant-ph/0502024}\ } (\bibinfo
  {year} {2005})}\BibitemShut {NoStop}%
\bibitem [{\citenamefont {Paterek}\ \emph {et~al.}(2009)\citenamefont
  {Paterek}, \citenamefont {Daki\'c},\ and\ \citenamefont
  {Brukner}}]{PhysRevA.79.012109}%
  \BibitemOpen
  \bibfield  {author} {\bibinfo {author} {\bibfnamefont {T.}~\bibnamefont
  {Paterek}}, \bibinfo {author} {\bibfnamefont {B.}~\bibnamefont {Daki\'c}}, \
  and\ \bibinfo {author} {\bibfnamefont {{\v C}.}~\bibnamefont {Brukner}},\
  }\href@noop {} {\bibfield  {journal} {\bibinfo  {journal} {Phys. Rev. A}\
  }\textbf {\bibinfo {volume} {79}},\ \bibinfo {pages} {012109} (\bibinfo
  {year} {2009})}\BibitemShut {NoStop}%
\bibitem [{\citenamefont {Brierley}\ and\ \citenamefont
  {Weigert}(2009)}]{PhysRevA.79.052316}%
  \BibitemOpen
  \bibfield  {author} {\bibinfo {author} {\bibfnamefont {S.}~\bibnamefont
  {Brierley}}\ and\ \bibinfo {author} {\bibfnamefont {S.}~\bibnamefont
  {Weigert}},\ }\href@noop {} {\bibfield  {journal} {\bibinfo  {journal} {Phys.
  Rev. A}\ }\textbf {\bibinfo {volume} {79}},\ \bibinfo {pages} {052316}
  (\bibinfo {year} {2009})}\BibitemShut {NoStop}%
\bibitem [{\citenamefont {Jaming}\ \emph {et~al.}(2009)\citenamefont {Jaming},
  \citenamefont {Matolcsi}, \citenamefont {M\'ora}, \citenamefont
  {Sz\"oll\H{o}si},\ and\ \citenamefont {Weiner}}]{JPA.42.245305}%
  \BibitemOpen
  \bibfield  {author} {\bibinfo {author} {\bibfnamefont {P.}~\bibnamefont
  {Jaming}}, \bibinfo {author} {\bibfnamefont {M.}~\bibnamefont {Matolcsi}},
  \bibinfo {author} {\bibfnamefont {P.}~\bibnamefont {M\'ora}}, \bibinfo
  {author} {\bibfnamefont {F.}~\bibnamefont {Sz\"oll\H{o}si}}, \ and\ \bibinfo
  {author} {\bibfnamefont {M.}~\bibnamefont {Weiner}},\ }\href@noop {}
  {\bibfield  {journal} {\bibinfo  {journal} {J. Phys. A}\ }\textbf {\bibinfo
  {volume} {42}},\ \bibinfo {pages} {245305} (\bibinfo {year}
  {2009})}\BibitemShut {NoStop}%
\bibitem [{\citenamefont {McNulty}\ and\ \citenamefont
  {Weigert}(2012)}]{JPA.45.102001}%
  \BibitemOpen
  \bibfield  {author} {\bibinfo {author} {\bibfnamefont {D.}~\bibnamefont
  {McNulty}}\ and\ \bibinfo {author} {\bibfnamefont {S.}~\bibnamefont
  {Weigert}},\ }\href@noop {} {\bibfield  {journal} {\bibinfo  {journal} {J.
  Phys. A}\ }\textbf {\bibinfo {volume} {45}},\ \bibinfo {pages} {102001}
  (\bibinfo {year} {2012})}\BibitemShut {NoStop}%
\bibitem [{\citenamefont {Goyeneche}(2013)}]{JPhysA.46.105301}%
  \BibitemOpen
  \bibfield  {author} {\bibinfo {author} {\bibfnamefont {D.}~\bibnamefont
  {Goyeneche}},\ }\href@noop {} {\bibfield  {journal} {\bibinfo  {journal} {J.
  Phys. A}\ }\textbf {\bibinfo {volume} {46}},\ \bibinfo {pages} {105301}
  (\bibinfo {year} {2013})}\BibitemShut {NoStop}%
\bibitem [{\citenamefont {Szanto}(2016)}]{1502.05245}%
  \BibitemOpen
  \bibfield  {author} {\bibinfo {author} {\bibfnamefont {A.}~\bibnamefont
  {Szanto}},\ }\href@noop {} {\bibfield  {journal} {\bibinfo  {journal} {L.
  Alg. Appl.}\ }\textbf {\bibinfo {volume} {496}},\ \bibinfo {pages} {392}
  (\bibinfo {year} {2016})}\BibitemShut {NoStop}%
\bibitem [{\citenamefont {Jedwab}\ and\ \citenamefont
  {Yen}(2016)}]{1604.04797}%
  \BibitemOpen
  \bibfield  {author} {\bibinfo {author} {\bibfnamefont {J.}~\bibnamefont
  {Jedwab}}\ and\ \bibinfo {author} {\bibfnamefont {L.}~\bibnamefont {Yen}},\
  }\href@noop {} {\bibfield  {journal} {\bibinfo  {journal} {arXiv:1604.04797}\
  } (\bibinfo {year} {2016})}\BibitemShut {NoStop}%
\bibitem [{\citenamefont {Mandayam}\ \emph {et~al.}(2014)\citenamefont
  {Mandayam}, \citenamefont {Bandyopadhyay}, \citenamefont {Grassl},\ and\
  \citenamefont {Wootters}}]{QuantInfComp.14.823}%
  \BibitemOpen
  \bibfield  {author} {\bibinfo {author} {\bibfnamefont {P.}~\bibnamefont
  {Mandayam}}, \bibinfo {author} {\bibfnamefont {S.}~\bibnamefont
  {Bandyopadhyay}}, \bibinfo {author} {\bibfnamefont {M.}~\bibnamefont
  {Grassl}}, \ and\ \bibinfo {author} {\bibfnamefont {W.~K.}\ \bibnamefont
  {Wootters}},\ }\href@noop {} {\bibfield  {journal} {\bibinfo  {journal}
  {Quant. Inf. Comp.}\ }\textbf {\bibinfo {volume} {14}},\ \bibinfo {pages}
  {823} (\bibinfo {year} {2014})}\BibitemShut {NoStop}%
\bibitem [{\citenamefont {Thas}(2014)}]{1407.2778}%
  \BibitemOpen
  \bibfield  {author} {\bibinfo {author} {\bibfnamefont {K.}~\bibnamefont
  {Thas}},\ }\href@noop {} {\bibfield  {journal} {\bibinfo  {journal}
  {arXiv:1407.2778}\ } (\bibinfo {year} {2014})}\BibitemShut {NoStop}%
\bibitem [{\citenamefont {Hegde}\ and\ \citenamefont
  {Mandayam}(2015)}]{1508.05892}%
  \BibitemOpen
  \bibfield  {author} {\bibinfo {author} {\bibfnamefont {V.}~\bibnamefont
  {Hegde}}\ and\ \bibinfo {author} {\bibfnamefont {P.}~\bibnamefont
  {Mandayam}},\ }\href@noop {} {\bibfield  {journal} {\bibinfo  {journal}
  {arXiv:1508.05892}\ } (\bibinfo {year} {2015})}\BibitemShut {NoStop}%
\bibitem [{\citenamefont {Schwinger}(1960)}]{PNAS.46.560}%
  \BibitemOpen
  \bibfield  {author} {\bibinfo {author} {\bibfnamefont {J.}~\bibnamefont
  {Schwinger}},\ }\href@noop {} {\bibfield  {journal} {\bibinfo  {journal}
  {Proc. Nat. Acad. Sci. USA}\ }\textbf {\bibinfo {volume} {46}},\ \bibinfo
  {pages} {570} (\bibinfo {year} {1960})}\BibitemShut {NoStop}%
\bibitem [{\citenamefont {Korzekwa}\ \emph {et~al.}(2014)\citenamefont
  {Korzekwa}, \citenamefont {Jennings},\ and\ \citenamefont
  {Rudolph}}]{korzekwa2014operational}%
  \BibitemOpen
  \bibfield  {author} {\bibinfo {author} {\bibfnamefont {K.}~\bibnamefont
  {Korzekwa}}, \bibinfo {author} {\bibfnamefont {D.}~\bibnamefont {Jennings}},
  \ and\ \bibinfo {author} {\bibfnamefont {T.}~\bibnamefont {Rudolph}},\
  }\href@noop {} {\bibfield  {journal} {\bibinfo  {journal} {Phys. Rev. A}\
  }\textbf {\bibinfo {volume} {89}},\ \bibinfo {pages} {052108} (\bibinfo
  {year} {2014})}\BibitemShut {NoStop}%
\bibitem [{\citenamefont {Jak\'{o}bczyk}\ and\ \citenamefont
  {Siennicki}(2001)}]{Jakobczyk_01}%
  \BibitemOpen
  \bibfield  {author} {\bibinfo {author} {\bibfnamefont {L.}~\bibnamefont
  {Jak\'{o}bczyk}}\ and\ \bibinfo {author} {\bibfnamefont {M.}~\bibnamefont
  {Siennicki}},\ }\href@noop {} {\bibfield  {journal} {\bibinfo  {journal}
  {Phys. Lett. A}\ }\textbf {\bibinfo {volume} {286}},\ \bibinfo {pages} {383}
  (\bibinfo {year} {2001})}\BibitemShut {NoStop}%
\bibitem [{\citenamefont {Kimura}(2003)}]{Kimura_03}%
  \BibitemOpen
  \bibfield  {author} {\bibinfo {author} {\bibfnamefont {G.}~\bibnamefont
  {Kimura}},\ }\href@noop {} {\bibfield  {journal} {\bibinfo  {journal} {Phys.
  Lett. A}\ }\textbf {\bibinfo {volume} {314}},\ \bibinfo {pages} {339}
  (\bibinfo {year} {2003})}\BibitemShut {NoStop}%
\bibitem [{\citenamefont {Appleby}(2009)}]{Appleby_09}%
  \BibitemOpen
  \bibfield  {author} {\bibinfo {author} {\bibfnamefont {D.~M.}\ \bibnamefont
  {Appleby}},\ }\href@noop {} {\bibfield  {journal} {\bibinfo  {journal} {AIP
  Conf. Proc.}\ }\textbf {\bibinfo {volume} {1101}},\ \bibinfo {pages} {223}
  (\bibinfo {year} {2009})}\BibitemShut {NoStop}%
\bibitem [{\citenamefont {Bennett}\ \emph {et~al.}(1999)\citenamefont
  {Bennett}, \citenamefont {DiVincenzo}, \citenamefont {Mor}, \citenamefont
  {Shor}, \citenamefont {Smolin},\ and\ \citenamefont
  {Terhal}}]{PhysRevLett.82.5385}%
  \BibitemOpen
  \bibfield  {author} {\bibinfo {author} {\bibfnamefont {C.~H.}\ \bibnamefont
  {Bennett}}, \bibinfo {author} {\bibfnamefont {D.~P.}\ \bibnamefont
  {DiVincenzo}}, \bibinfo {author} {\bibfnamefont {T.}~\bibnamefont {Mor}},
  \bibinfo {author} {\bibfnamefont {P.~W.}\ \bibnamefont {Shor}}, \bibinfo
  {author} {\bibfnamefont {J.~A.}\ \bibnamefont {Smolin}}, \ and\ \bibinfo
  {author} {\bibfnamefont {B.~M.}\ \bibnamefont {Terhal}},\ }\href@noop {}
  {\bibfield  {journal} {\bibinfo  {journal} {Phys. Rev. Lett.}\ }\textbf
  {\bibinfo {volume} {82}},\ \bibinfo {pages} {5385} (\bibinfo {year}
  {1999})}\BibitemShut {NoStop}%
\bibitem [{\citenamefont {Bravyi}\ and\ \citenamefont
  {Smolin}(2011)}]{PhysRevA.84.042306}%
  \BibitemOpen
  \bibfield  {author} {\bibinfo {author} {\bibfnamefont {S.}~\bibnamefont
  {Bravyi}}\ and\ \bibinfo {author} {\bibfnamefont {J.~A.}\ \bibnamefont
  {Smolin}},\ }\href@noop {} {\bibfield  {journal} {\bibinfo  {journal} {Phys.
  Rev. A}\ }\textbf {\bibinfo {volume} {84}},\ \bibinfo {pages} {042306}
  (\bibinfo {year} {2011})}\BibitemShut {NoStop}%
\bibitem [{\citenamefont {Brierley}\ \emph {et~al.}(2010)\citenamefont
  {Brierley}, \citenamefont {Weigert},\ and\ \citenamefont
  {Bengtsson}}]{0907.4097}%
  \BibitemOpen
  \bibfield  {author} {\bibinfo {author} {\bibfnamefont {S.}~\bibnamefont
  {Brierley}}, \bibinfo {author} {\bibfnamefont {S.}~\bibnamefont {Weigert}}, \
  and\ \bibinfo {author} {\bibfnamefont {I.}~\bibnamefont {Bengtsson}},\
  }\href@noop {} {\bibfield  {journal} {\bibinfo  {journal} {Quant. Inf.
  Comp.}\ }\textbf {\bibinfo {volume} {10}},\ \bibinfo {pages} {803} (\bibinfo
  {year} {2010})}\BibitemShut {NoStop}%
\bibitem [{\citenamefont {Weiner}(2013)}]{PAMS.141.1963}%
  \BibitemOpen
  \bibfield  {author} {\bibinfo {author} {\bibfnamefont {M.}~\bibnamefont
  {Weiner}},\ }\href@noop {} {\bibfield  {journal} {\bibinfo  {journal} {Proc.
  Amer. Math. Soc.}\ }\textbf {\bibinfo {volume} {141}},\ \bibinfo {pages}
  {1963} (\bibinfo {year} {2013})}\BibitemShut {NoStop}%
\bibitem [{\citenamefont {Haagerup}(1997)}]{Haagerup}%
  \BibitemOpen
  \bibfield  {author} {\bibinfo {author} {\bibfnamefont {U.}~\bibnamefont
  {Haagerup}},\ }\href@noop {} {\emph {\bibinfo {title} {Operator Algebras and
  Quantum Field Theory}}}\ (\bibinfo  {publisher} {Int. Press, Cambridge},\
  \bibinfo {year} {1997})\BibitemShut {NoStop}%
\bibitem [{\citenamefont {Butson}(1962)}]{Butson62}%
  \BibitemOpen
  \bibfield  {author} {\bibinfo {author} {\bibfnamefont {A.~T.}\ \bibnamefont
  {Butson}},\ }\href@noop {} {\bibfield  {journal} {\bibinfo  {journal} {Proc.
  Amer. Math. Soc.}\ }\textbf {\bibinfo {volume} {13}},\ \bibinfo {pages} {894}
  (\bibinfo {year} {1962})}\BibitemShut {NoStop}%
\bibitem [{\citenamefont {Gabidulin}\ and\ \citenamefont
  {Shorin}(2002)}]{ProbInfTrans.38.255}%
  \BibitemOpen
  \bibfield  {author} {\bibinfo {author} {\bibfnamefont {E.~M.}\ \bibnamefont
  {Gabidulin}}\ and\ \bibinfo {author} {\bibfnamefont {V.~V.}\ \bibnamefont
  {Shorin}},\ }\href@noop {} {\bibfield  {journal} {\bibinfo  {journal} {Probl.
  Inf. Trans.}\ }\textbf {\bibinfo {volume} {38}},\ \bibinfo {pages} {255}
  (\bibinfo {year} {2002})}\BibitemShut {NoStop}%
\bibitem [{\citenamefont {Aschbacher}\ \emph {et~al.}(2007)\citenamefont
  {Aschbacher}, \citenamefont {Childs},\ and\ \citenamefont
  {Wocjan}}]{JAlgComb.25.111}%
  \BibitemOpen
  \bibfield  {author} {\bibinfo {author} {\bibfnamefont {M.}~\bibnamefont
  {Aschbacher}}, \bibinfo {author} {\bibfnamefont {A.~M.}\ \bibnamefont
  {Childs}}, \ and\ \bibinfo {author} {\bibfnamefont {P.}~\bibnamefont
  {Wocjan}},\ }\href@noop {} {\bibfield  {journal} {\bibinfo  {journal} {J.
  Alg. Comb.}\ }\textbf {\bibinfo {volume} {25}},\ \bibinfo {pages} {111}
  (\bibinfo {year} {2007})}\BibitemShut {NoStop}%
\bibitem [{\citenamefont {Gibbons}\ \emph {et~al.}(2004)\citenamefont
  {Gibbons}, \citenamefont {Hoffman},\ and\ \citenamefont
  {Wootters}}]{PhysRevA.70.062101}%
  \BibitemOpen
  \bibfield  {author} {\bibinfo {author} {\bibfnamefont {K.~S.}\ \bibnamefont
  {Gibbons}}, \bibinfo {author} {\bibfnamefont {M.~J.}\ \bibnamefont
  {Hoffman}}, \ and\ \bibinfo {author} {\bibfnamefont {W.~K.}\ \bibnamefont
  {Wootters}},\ }\href@noop {} {\bibfield  {journal} {\bibinfo  {journal}
  {Phys. Rev. A}\ }\textbf {\bibinfo {volume} {70}},\ \bibinfo {pages} {062101}
  (\bibinfo {year} {2004})}\BibitemShut {NoStop}%
\bibitem [{\citenamefont {Weigert}\ and\ \citenamefont
  {Wilkinson}(2008)}]{PhysRevA.78.020303}%
  \BibitemOpen
  \bibfield  {author} {\bibinfo {author} {\bibfnamefont {S.}~\bibnamefont
  {Weigert}}\ and\ \bibinfo {author} {\bibfnamefont {M.}~\bibnamefont
  {Wilkinson}},\ }\href@noop {} {\bibfield  {journal} {\bibinfo  {journal}
  {Phys. Rev. A}\ }\textbf {\bibinfo {volume} {78}},\ \bibinfo {pages} {020303}
  (\bibinfo {year} {2008})}\BibitemShut {NoStop}%
\bibitem [{\citenamefont {Matolcsi}\ \emph {et~al.}(2013)\citenamefont
  {Matolcsi}, \citenamefont {Ruzsa},\ and\ \citenamefont {Weiner}}]{1201.0631}%
  \BibitemOpen
  \bibfield  {author} {\bibinfo {author} {\bibfnamefont {M.}~\bibnamefont
  {Matolcsi}}, \bibinfo {author} {\bibfnamefont {I.~Z.}\ \bibnamefont {Ruzsa}},
  \ and\ \bibinfo {author} {\bibfnamefont {M.}~\bibnamefont {Weiner}},\
  }\href@noop {} {\bibfield  {journal} {\bibinfo  {journal} {Australas. J.
  Combin.}\ }\textbf {\bibinfo {volume} {55}},\ \bibinfo {pages} {35} (\bibinfo
  {year} {2013})}\BibitemShut {NoStop}%
\bibitem [{\citenamefont {Cameron}\ and\ \citenamefont
  {Seidel}(1973)}]{IndagMath.35.1}%
  \BibitemOpen
  \bibfield  {author} {\bibinfo {author} {\bibfnamefont {P.~J.}\ \bibnamefont
  {Cameron}}\ and\ \bibinfo {author} {\bibfnamefont {J.~J.}\ \bibnamefont
  {Seidel}},\ }\href@noop {} {\bibfield  {journal} {\bibinfo  {journal} {Indag.
  Math.}\ }\textbf {\bibinfo {volume} {35}},\ \bibinfo {pages} {1} (\bibinfo
  {year} {1973})}\BibitemShut {NoStop}%
\bibitem [{\citenamefont {Calderbank}\ \emph {et~al.}(1997)\citenamefont
  {Calderbank}, \citenamefont {Cameron}, \citenamefont {Kantor},\ and\
  \citenamefont {Seidel}}]{ProcLonMathSoc.75.436}%
  \BibitemOpen
  \bibfield  {author} {\bibinfo {author} {\bibfnamefont {A.~R.}\ \bibnamefont
  {Calderbank}}, \bibinfo {author} {\bibfnamefont {P.~J.}\ \bibnamefont
  {Cameron}}, \bibinfo {author} {\bibfnamefont {W.~M.}\ \bibnamefont {Kantor}},
  \ and\ \bibinfo {author} {\bibfnamefont {J.~J.}\ \bibnamefont {Seidel}},\
  }\href@noop {} {\bibfield  {journal} {\bibinfo  {journal} {Proc. London Math.
  Soc.}\ }\textbf {\bibinfo {volume} {75}},\ \bibinfo {pages} {436} (\bibinfo
  {year} {1997})}\BibitemShut {NoStop}%
\bibitem [{\citenamefont {Gow}(2014)}]{1410.4059}%
  \BibitemOpen
  \bibfield  {author} {\bibinfo {author} {\bibfnamefont {R.}~\bibnamefont
  {Gow}},\ }\href@noop {} {\bibfield  {journal} {\bibinfo  {journal}
  {arXiv:1410.4059}\ } (\bibinfo {year} {2014})}\BibitemShut {NoStop}%
\bibitem [{\citenamefont {Wocjan}\ and\ \citenamefont {Beth}(2005)}]{0407081}%
  \BibitemOpen
  \bibfield  {author} {\bibinfo {author} {\bibfnamefont {P.}~\bibnamefont
  {Wocjan}}\ and\ \bibinfo {author} {\bibfnamefont {T.}~\bibnamefont {Beth}},\
  }\href@noop {} {\bibfield  {journal} {\bibinfo  {journal} {Quant. Inf.
  Comp.}\ }\textbf {\bibinfo {volume} {5}},\ \bibinfo {pages} {93} (\bibinfo
  {year} {2005})}\BibitemShut {NoStop}%
\bibitem [{\citenamefont {Evans}(2006)}]{DCC.40.121}%
  \BibitemOpen
  \bibfield  {author} {\bibinfo {author} {\bibfnamefont {A.~B.}\ \bibnamefont
  {Evans}},\ }\href@noop {} {\bibfield  {journal} {\bibinfo  {journal} {Des.
  Codes. Crypt.}\ }\textbf {\bibinfo {volume} {40}},\ \bibinfo {pages} {121}
  (\bibinfo {year} {2006})}\BibitemShut {NoStop}%
\bibitem [{\citenamefont {Wanless}\ and\ \citenamefont
  {Debb}(2006)}]{DCC.40.131}%
  \BibitemOpen
  \bibfield  {author} {\bibinfo {author} {\bibfnamefont {I.~M.}\ \bibnamefont
  {Wanless}}\ and\ \bibinfo {author} {\bibfnamefont {B.~S.}\ \bibnamefont
  {Debb}},\ }\href@noop {} {\bibfield  {journal} {\bibinfo  {journal} {Des.
  Codes. Crypt.}\ }\textbf {\bibinfo {volume} {40}},\ \bibinfo {pages} {131}
  (\bibinfo {year} {2006})}\BibitemShut {NoStop}%
\bibitem [{\citenamefont {Kharaghani}\ and\ \citenamefont
  {Tayfeh-Rezaie}(2005)}]{kharaghani2005hadamard}%
  \BibitemOpen
  \bibfield  {author} {\bibinfo {author} {\bibfnamefont {H.}~\bibnamefont
  {Kharaghani}}\ and\ \bibinfo {author} {\bibfnamefont {B.}~\bibnamefont
  {Tayfeh-Rezaie}},\ }\href@noop {} {\bibfield  {journal} {\bibinfo  {journal}
  {J. Combin. Des.}\ }\textbf {\bibinfo {volume} {13}},\ \bibinfo {pages} {435}
  (\bibinfo {year} {2005})}\BibitemShut {NoStop}%
\bibitem [{\citenamefont {Hall}\ and\ \citenamefont
  {Rao}(2010)}]{hall2010mutually}%
  \BibitemOpen
  \bibfield  {author} {\bibinfo {author} {\bibfnamefont {J.~L.}\ \bibnamefont
  {Hall}}\ and\ \bibinfo {author} {\bibfnamefont {A.}~\bibnamefont {Rao}},\
  }\href@noop {} {\bibfield  {journal} {\bibinfo  {journal} {J. Phys. A}\
  }\textbf {\bibinfo {volume} {43}},\ \bibinfo {pages} {135302} (\bibinfo
  {year} {2010})}\BibitemShut {NoStop}%
\bibitem [{\citenamefont {Paterek}\ \emph {et~al.}(2010)\citenamefont
  {Paterek}, \citenamefont {Paw{\l}owski}, \citenamefont {Grassl},\ and\
  \citenamefont {Brukner}}]{PhysScr.T140.014031}%
  \BibitemOpen
  \bibfield  {author} {\bibinfo {author} {\bibfnamefont {T.}~\bibnamefont
  {Paterek}}, \bibinfo {author} {\bibfnamefont {M.}~\bibnamefont
  {Paw{\l}owski}}, \bibinfo {author} {\bibfnamefont {M.}~\bibnamefont
  {Grassl}}, \ and\ \bibinfo {author} {\bibfnamefont {{\v C}.}~\bibnamefont
  {Brukner}},\ }\href@noop {} {\bibfield  {journal} {\bibinfo  {journal} {Phys.
  Scr.}\ }\textbf {\bibinfo {volume} {T140}},\ \bibinfo {pages} {014031}
  (\bibinfo {year} {2010})}\BibitemShut {NoStop}%
\bibitem [{\citenamefont {Saniga}\ \emph {et~al.}(2004)\citenamefont {Saniga},
  \citenamefont {Planat},\ and\ \citenamefont {Rosu}}]{JOptB.6.L19}%
  \BibitemOpen
  \bibfield  {author} {\bibinfo {author} {\bibfnamefont {M.}~\bibnamefont
  {Saniga}}, \bibinfo {author} {\bibfnamefont {M.}~\bibnamefont {Planat}}, \
  and\ \bibinfo {author} {\bibfnamefont {H.}~\bibnamefont {Rosu}},\ }\href@noop
  {} {\bibfield  {journal} {\bibinfo  {journal} {J. Opt. B}\ }\textbf {\bibinfo
  {volume} {6}},\ \bibinfo {pages} {L19} (\bibinfo {year} {2004})}\BibitemShut
  {NoStop}%
\end{thebibliography}

%

\end{document}